\newcommand{\keywords}[1]{\par\addvspace\baselineskip
\noindent\keywordname\enspace\ignorespaces#1}
\newcommand{\commentOut}[1]{}
\begin{document}

\mainmatter  

\title{Efficient Sequential and Parallel Algorithms for Planted Motif Search}


%
%
\author{Marius Nicolae
\and Sanguthevar Rajasekaran}
\authorrunning{Efficient Sequential and Parallel Algorithms for Planted Motif Search}

\institute{Dept. of Computer Science and Engineering, Univ. of
Connecticut, Storrs, CT, USA\\
\mailsa}
%
%

\maketitle

\begin{abstract}
\vspace{-0.2in}
Motif searching is an important step in the detection of rare events
occurring in a set of DNA or protein sequences. One formulation of the problem 
is known as $(l,d)$-motif search or Planted Motif Search (PMS). In PMS we are given two 
integers $l$ and $d$ and $n$ biological sequences. We want to find all sequences of length $l$ that
appear in each of the input sequences with at most $d$ mismatches.
The PMS problem is NP-complete. PMS algorithms are typically evaluated on certain 
instances considered challenging. 
This paper presents an exact parallel PMS
algorithm called PMS8. PMS8 is the first algorithm to solve the challenging $(l,d)$ instances $(25,10)$ and
$(26,11)$. PMS8 is also efficient on instances with larger $l$ and $d$ 
such as $(50,21)$. This paper also introduces necessary and sufficient conditions for 3 $l$-mers
to have a common $d$-neighbor.

\keywords{Planted Motif Search, PMS, Parallel Algorithms, MPI}
\vspace{-0.1in}
\end{abstract}

\section{Introduction}
This paper presents an efficient exact parallel algorithm for
the Planted Motif Search (PMS) problem also known as the $(l,d)$ motif problem
\cite{Pev00}.
A string of length $l$ is caller an $l$-mer. 
The number of positions where two $l$-mers $u$ and $v$ differ is called their
Hamming distance and is denoted by $Hd(u,v)$. For any string $T$, $T[i..j]$ is
the substring of $T$ starting at position $i$ and ending at position $j$.  
The PMS problem is the following. Given $n$ sequences $S_1, S_2, \ldots, S_n$ of
length $m$ each, from an alphabet $\Sigma$ and two integers $l$ and $d$, identify
all $l$-mers $M, M \in \Sigma^l$, that occur in at least one location in each
of the $n$ sequences with a Hamming distance of at most $d$. More formally, $M$ is a
motif if and only if $\forall i,1\leq i\leq n, \exists j_i, 1\leq
j_i\leq m-l+1$, such that $Hd(M,S_i[j_i..j_i+m-1])\leq d$.

The PMS problem is essentially the same as the Closest Substring
problem. These problems have applications in PCR primer
design, genetic probe design, discovering potential drug targets, antisense
drug design, finding unbiased consensus of a protein family, creating
diagnostic probes and motif finding (see e.g., \cite{LLM+99}). Therefore,
efficient algorithms for solving the PMS problem are very important in biology
and bioinformatics.

A PMS algorithm that finds all the motifs for a given input is called an exact
algorithm. All known exact algorithms have an exponential worst case runtime
because the PMS problem is NP-complete \cite{LLM+99}.
An exact algorithm can be built using two approaches. One is
sample driven: for all $(m - l +1)^n$ possible 
combinations of $l$-mers coming from different strings, generate the common neighborhood.
   The other is pattern-driven:
for all $\Sigma^l$ possible $l$-mers check which are motifs.
Many algorithms employ a combination of these two techniques.
For example, \cite{YHZG12} and \cite{DBR07} generate the common neighbors for every pair
of $l$-mers coming from two of the input strings. Every neighbor is then matches against
the remaining $n-2$ input strings to confirm or reject it as a motif. Other algorithms (\cite{DRD12,HJG09})
consider groups of three $l$-mers  instead of two.

PMS algorithms are typically tested on instances generated as follows (also
see \cite{Pev00,DBR07}): 20 DNA strings of length 600 are generated
according to the i.i.d model. A random $l$-mer is chosen as a motif and
planted at random location in each input strings. Every planted instance is 
modified in at most $d$ positions. 
For a given integer $l$, the instance $(l,d)$ is defined to be challenging if $d$ is the smallest integer for which the expected number of motifs
of length $l$ that occur in the input by random chance is $\geq 1$. Some of the challenging instances are 
$(13,4),(15,5),(17,6),(19,7),(21,8),(23,9),(25,10),(26,11)$, etc. 

The largest challenging instance solved up to now has been $(23,9)$. To the best of our knowledge
the only algorithm to solve $(23,9)$ has been qPMS7 \cite{DRD12}. 
The algorithm in \cite{DeM11} can solve instances with relatively large $l$
(up to $48$) provided that $d$ is at most $l/4$. However, most of the well known
challenging instances have $d>l/4$. 
PairMotif \cite{YHZG12} can solve instances with larger $l$, such as $(27,9)$ or $(30,9)$, but these are 
significantly less challenging than $(23,9)$.

In this paper we propose a new exact algorithm, PMS8, which can solve both
instances with large $l$ and instances with large $d$. One of the basic steps employed in many
PMS algorithms (such as PMSprune, PMS5, PMS6, and qPMS7) is that of computing all the common neighbors of
three $l$-mers. In qPMS7, this problem is solved using an Integer Linear Programming (ILP) formulation. In particular, a
large number of ILP instances are solved as a part of a preprocessing step and a table is populated. This table is then repeatedly looked up to identify common neighbors of three $l$-mers. This preprocessing step takes a considerable amount of time and the lookup table calls for a large amount of memory. In this paper we offer a novel algorithm for computing all the common neighbors of three $l$-mers. This algorithm eliminates the preprocessing step. In particular, we don't solve any ILP instance. We also don't employ any lookup tables and hence we reduce the memory usage. We feel that this algorithm will find independent applications.
Specifically, we state and prove necessary and sufficient
conditions for $3$ $l$-mers to have a common neighbor (section
\ref{sec_pruning}). 

\vspace{-0.1in}
\section{Methods}
For any $l$-mer $u$ we define its $d$-neighborhood as the set of $l$-mers $v$ for which $Hd(u,v)\leq d$.
For any set of $l$-mers $T$ we define the common $d$-neighborhood of $T$ as the intersection of the 
$d$-neighborhoods of all $l$-mers in $T$. To compute common neighborhoods, a natural approach is to traverse 
the tree of all possible $l$-mers and identify the common neighbors. A pseudocode is given in appendix 
\ref{secGenNeighborhood}. A node at depth $k$, which represents a $k$-mer, is not explored 
deeper if certain pruning conditions are met. Thus, the better the pruning conditions are, the faster 
will be the algorithm. We discuss pruning conditions in section \ref{sec_pruning}. 

PMS8 consists of a sample driven part followed by a pattern
driven part.  In the sample driven part we generate tuples of $l$-mers originating from different strings.
In the pattern driven part we generate the common $d$-neighborhood of such
tuples.    Initially we build a matrix $R$ of
size $n \times (m-l+1$) where row $i$ contains all the $l$-mers in $S_i$. We pick an
$l$-mer $x$ from row 1 of $R$ and push it on a stack. We filter out any
$l$-mer in $R$ at a distance greater than $2d$ from $x$.
Then we pick an $l$-mer from the second row of $R$ and push it on the stack. We filter out any $l$-mer in $R$ that
does not have a common neighbor with the $l$-mers on the stack; then we repeat
the process.  A necessary and sufficient condition for 3 $l$-mers to have a common neighbor is
discussed in section \ref{sec_pruning}. For 4 or more
$l$-mers we only have necessary conditions, so we may generate tuples
that will not lead to solutions. 
If any row becomes empty, we discard the top of the stack, revert to the previous
instance of $R$ and try a different $l$-mer.  If the stack size is above a certain threshold 
(see section \ref{sec_threshold}) we generate the common $d$-neighborhood of the $l$-mers 
on the stack. For each neighbor $M$ we check whether there is at least one
$l$-mer $u$ in each row of $R$ such that $Hd(M,u)\leq d$. If this is true then $M$ is a motif.
PMS8 is illustrated in figure \ref{figFiltering} and its pseudocode is given in appendix \ref{secPMS8}.

\begin{figure}[!t]
\includegraphics[width=\linewidth]{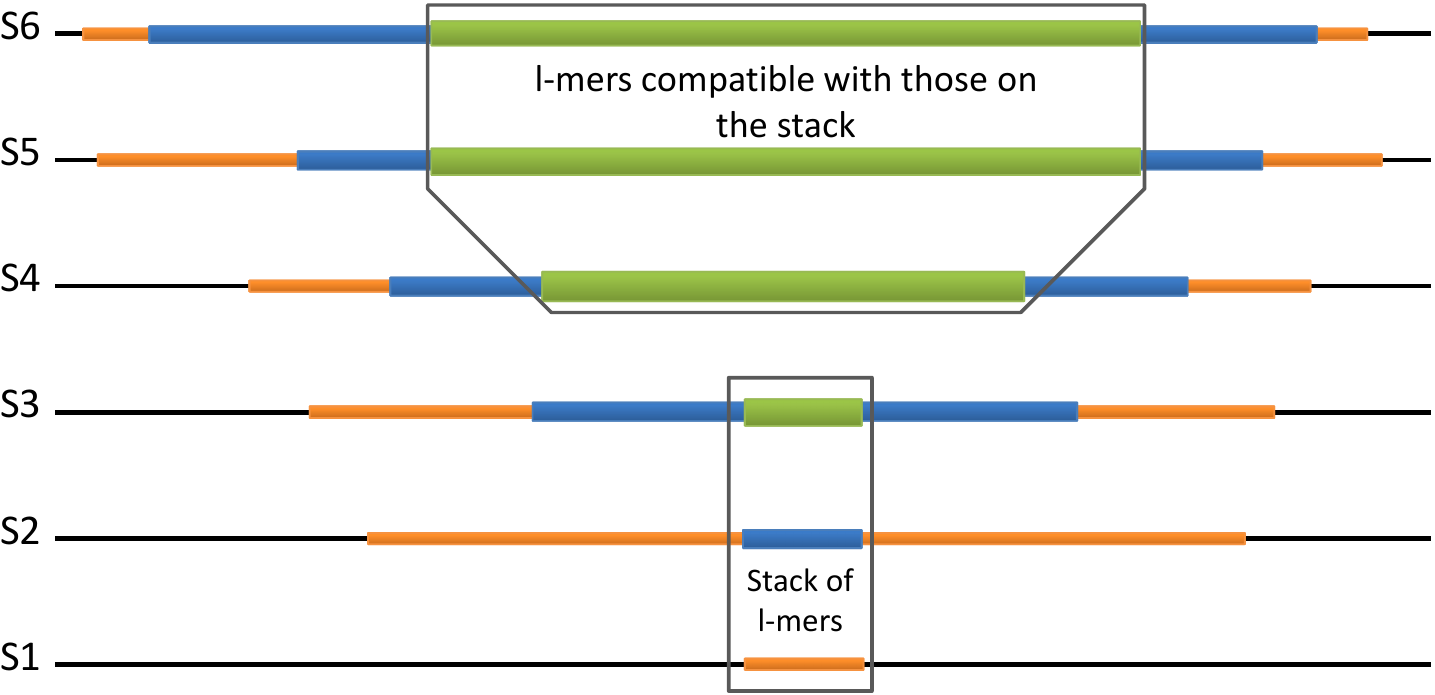}
\caption{\label{figFiltering}
Illustration of PMS8. We repeatedly push $l$-mers from different strings on a stack. From the remaining strings, 
we filter out $l$-mers incompatible with those on the stack. Once the stack has a certain size we generate the 
common $d$-neighbors of its $l$-mers. We compare the neighbors against the surviving $l$-mers to identify the motifs.}
\vspace{-0.1in}
\end{figure}

\vspace{-0.1in}
\subsection{Speedup techniques}
{\bf Sort rows by size.}
An important speedup technique is to reorder the
rows of $R$ by size after every filtering step.  This reduces the number of tuples that we
consider at lower stack sizes. These tuples require the most expensive filtering
because as the stack size increases fewer $l$-mers remain to be filtered.

{\bf Compress $l$-mers.}
\label{secCompressLmers}
We can speed up Hamming distance operations by compressing all the l-mers of $R$ in advance.
For example, for DNA we store 8 characters in a 16 bit integer, divided into 8 groups of 2 bits. 
For every $16$ bit integer $i$ we store in a table the number of non-zero groups of bits in $i$.
To compute the Hamming distance between two $l$-mers we first
perform an exclusive or of their compressed representations. Equal characters produce bits
of 0, different characters produce non-zero bits. Therefore, one table lookup provides the 
Hamming distance for 8 characters. One compressed $l$-mer
requires $l * \lceil \log|\Sigma| \rceil$ bits of storage. However, we only need 
the first $16$ bits of this representation
because the next 16 bits are the same as the first 16 bits of the $l$-mer $8$
positions to the right of the current one. Therefore, the table of
compressed $l$-mers only requires $O(n(m-l+1))$ words of memory.

{\bf Preprocess distances for pairs of $l$-mers.}
\label{secDistPairs}
The filtering step tests many times if two $l$-mers have a
distance of no more than $2d$. Thus, for every pair
of $l$-mers we compute this bit of information in advance. 

{\bf Cache locality.}
We can update $R$ in an efficient manner as follows. Every row in the
updated matrix $R'$ is a subset of the corresponding row in the current matrix
$R$ and thus we can store it in the same memory locations as $R$ by rearranging 
the row elements and keeping track how many of them belong to $R'$. 
This both reduces the memory requirement and improves cache locality:
the surviving $l$-mers in one filtering step will soon be accessed in the next one. 

\subsection{Memory and Runtime }
Since we store all matrices $R$ in the space of a single matrix they
only require $O(n(m-l+1))$ words of memory to which we add $O(n^2)$ words to store
 row sizes. The bits of information for compatible $l$-mer pairs take
$O((n(m-l+1))^2/w)$ words, where $w$ is the number of bits in a machine word.
The table of compressed $l$-mers takes $O(n(m-l+1))$ words. Therefore, the
total memory used by the algorithm is $O(n(n+m-l+1) + (n(m-l+1))^2 / w)$.

\label{sec_threshold}
The more time we spend in the sample driven part, 
the less time we have to spend in the pattern driven part and vice-versa. Ideally we want to
choose the threshold where we switch between the two parts such that their runtimes are almost equal. 
The optimal threshold can be determined empirically by running the algorithm on a small subset of the 
tuples. In practice, PMS8 heuristically estimates the threshold $t$ such that it increases with $d$ and $|\Sigma|$ to avoid
generating very large neighborhoods and it decreases with $m$ to
avoid spending too much time on filtering. A strong closed form 
runtime for the algorithm is difficult to derive. 
A more in depth analysis can be found in appendix \ref{secThreshold}. 
All the results reported in this paper 
have been obtained using the default threshold estimation.

\subsection{Parallel implementation}
To parallelize PMS8 we create $m-l+1$ sub problems,
one for each $l$-mer in the first string. The first string in each sub problem
is an $l$-mer of the original first string and the rest of the strings are the
same as in the original input. 
The processor with rank 0 is a scheduler and the others are workers. 
The scheduler spawns a separate worker thread 
to avoid using one processor just for scheduling. 
The scheduler reads the input and broadcasts it to all workers. 
Then each worker requests a sub problem from the
scheduler, solves it and repeats. The scheduler loops until all 
jobs have been requested and all workers have been notified that no more jobs are available.
At the end, all processors send their motifs to the scheduler which outputs
them. The process is illustrated in figure \ref{figWorkers}.  

\begin{figure}
\vspace{-0.1in}
\begin{minipage}[b]{0.5\linewidth}
\centering
\includegraphics[width=\linewidth]{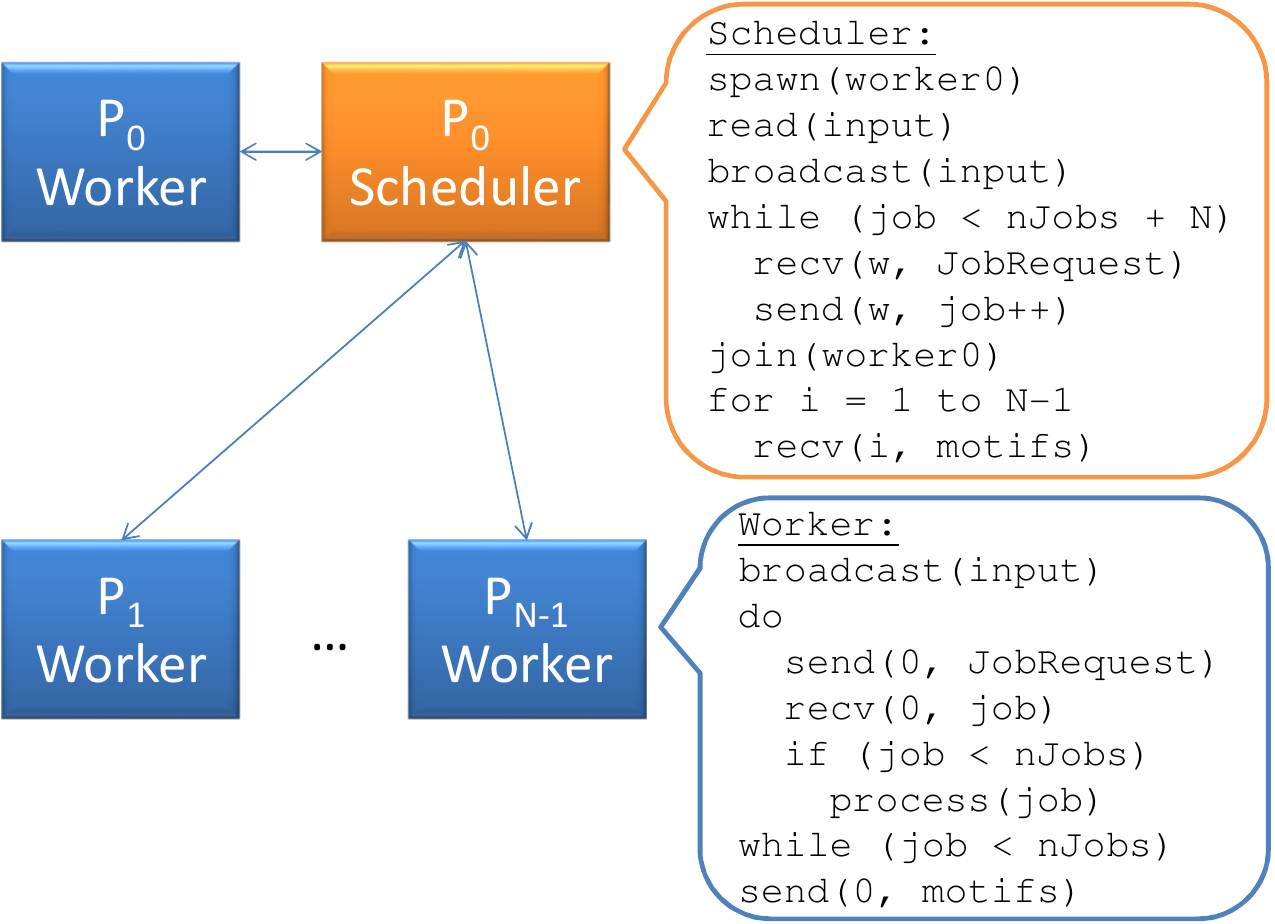}
\caption{\label{figWorkers}Parallel implementation using MPI. Processor 0 is a
scheduler and the others are workers. The scheduler also spawns a
separate thread and uses it as worker.}
\end{minipage}
\hspace{0.3cm}
\begin{minipage}[b]{0.5\linewidth}
\centering
\includegraphics[width=\linewidth]{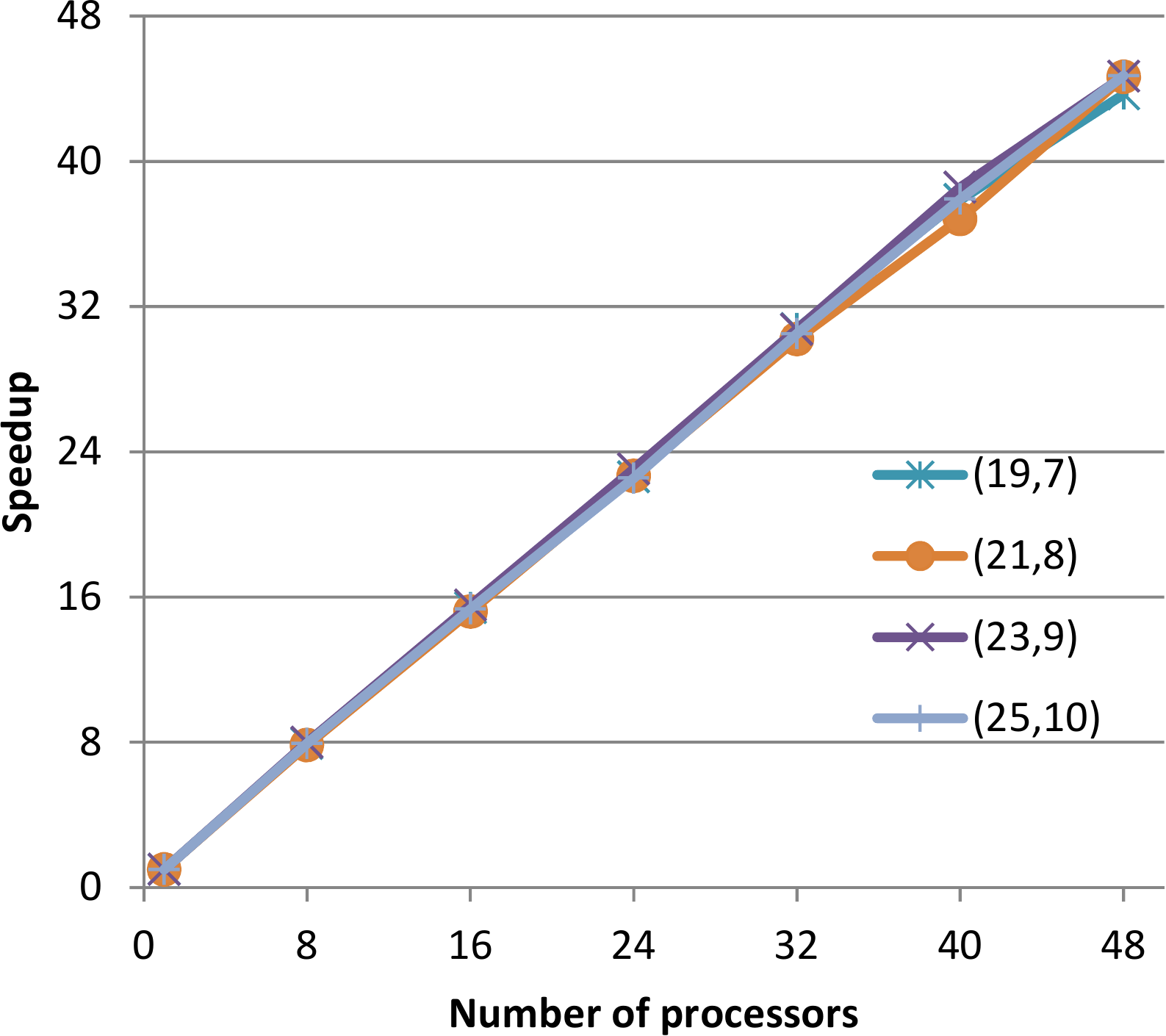}
\caption{\label{figSpeedup}Speedup of the multi-core version of PMS8 over the
single core version, for several datasets.}
\end{minipage}
\vspace{-0.2in}
\end{figure}

\subsection{Pruning conditions}
\label{sec_pruning}

In this section we present pruning conditions applied
for filtering $l$-mers in the sample driven part and for
pruning enumeration trees in the pattern driven part.

Two $l$-mers $a$ and $b$ have a common neighbor $M$ such that $Hd(a,M)\leq d_a$
and $Hd(b,M)\leq d_b$ if and only if $Hd(a,b)\leq d_a+d_b$.
For $3$ $l$-mers, no trivial necessary and sufficient conditions have been
known up to now. In \cite{DRK11} sufficient conditions for 3 $l$-mers
are obtained from a preprocessed table. However, as $l$ increases the memory 
requirement of the table becomes a bottleneck. We will give simple 
necessary and sufficient conditions for 3 $l$-mers to have
a common neighbor. These conditions are also necessary for
more than $3$ $l$-mers. 

Let $T$ be a set of $l$-mers and $M$ be an $l$-mer. If  $\sum_{u\in T}Hd(M,u) > |T| d$  
 then, by the pigeonhole principle, one $l$-mer must have a distance from $M$ greater than $d$.
Therefore, $M$ cannot be a common neighbor of the $l$-mers in $T$.
If we have a lower bound on  $\sum_{u\in T}Hd(M,u)$ for any $M$, then we
can use it as a pruning condition. If the lower bound is greater
than $|T|d$ then there is no common neighbor for $T$.
One such lower bound is the {\em consensus total distance}.

\begin{definition}
Let $T$ be a set of $l$-mers,  where $k=|T|$. For every $i$, the set $T_1[i],T_2[i],..,T_k[i]$
is called the $i$-th column of $T$.
Let $m_i$ be the maximum frequency of any character in column $i$.  Then $Cd(T)=\sum_{i=1..l}k-m_i$ is called the
consensus total distance of $T$.
\end{definition}

The consensus total distance is a lower bound for the total distance
between any $l$-mer $M$ and the $l$-mers in $T$ because, regardless of
$M$, the distance contributed by column $i$ to the total distance is at
least $k-m_i$.
The consensus total distance for a set of two $l$-mers $A$ and $B$ will be
denoted by $Cd(A,B)$.  Also notice that $Cd(A,B)=Hd(A,B)$.
We can easily prove the following lemma.

\begin{lemma}
\label{lemma_consensus}
Let $T$ be a set of $l$-mers and $k=|T|$. Let $d_1,d_2,\ldots d_k$ be
non-negative integers. 
There exists a $l$-mer $M$ such that $Hd(M,T_i)\leq d_i, \forall i$, only if
$Cd(T)\leq \Sigma_{i=1}^kd_i$.
\end{lemma}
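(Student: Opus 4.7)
The plan is to prove the contrapositive-style necessary condition directly by double-counting mismatches column by column. Assume such an $M$ exists, so that $\sum_{i=1}^{k} Hd(M,T_i) \le \sum_{i=1}^{k} d_i$ holds trivially by summing the hypotheses. The task then reduces to showing that the left-hand side is bounded below by $Cd(T)$, independently of the particular $M$.

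For this, I would rewrite the total Hamming distance by swapping the order of summation. For each column index $j \in \{1,\dots,l\}$, the contribution of column $j$ to $\sum_i Hd(M,T_i)$ is exactly the number of indices $i$ for which $T_i[j] \ne M[j]$. Since the character $M[j]$ can appear in column $j$ at most $m_j$ times (by the very definition of $m_j$ as the \emph{maximum} frequency in that column), this count is at least $k - m_j$. Summing over $j$ gives
\[
\sum_{i=1}^{k} Hd(M,T_i) \;=\; \sum_{j=1}^{l} \bigl|\{i : T_i[j] \ne M[j]\}\bigr| \;\ge\; \sum_{j=1}^{l} (k - m_j) \;=\; Cd(T).
\]
Chaining this with the trivial upper bound from the hypothesis yields $Cd(T) \le \sum_i d_i$, which is the desired conclusion.

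There is essentially no obstacle here; the argument is a one-line pigeonhole observation per column followed by summation. The only thing worth being careful about is making the quantifier on $M$ explicit (the bound $k - m_j$ is uniform in $M$ because $m_j$ is defined from $T$ alone), and noting that the inequality is tight whenever $M$ can be chosen to realize the column majority at every position, which motivates why $Cd(T)$ is the natural lower bound to use as a pruning quantity in the later algorithmic sections.
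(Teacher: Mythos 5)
Your proof is correct and follows essentially the same route as the paper: the column-wise observation that $M[j]$ can match at most $m_j$ characters in column $j$, giving the per-column lower bound $k-m_j$ on the contribution to $\sum_i Hd(M,T_i)$, is exactly the argument the paper uses to justify that $Cd(T)$ lower-bounds the total distance, after which the conclusion follows by summing the hypotheses $Hd(M,T_i)\le d_i$.
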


\begin{theorem}
\label{th_3sufficient}
Let $T$ be a set of 3 $l$-mers and $d_1,d_2,d_3$ be
non-negative integers. There exists a $l$-mer $M$ such that $Hd(M,T_i)\leq d_i, \forall
i, 1\leq i\leq 3$ if and only if the following conditions hold:
\begin{description}
\item[i)]  $Cd(T_i,T_j)\leq d_i+d_j, \forall i,j, 1 \leq i < j \leq 3$
\item[ii)] $Cd(T) \leq d_1+d_2+d_3$
\end{description}
\end{theorem}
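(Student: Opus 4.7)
The ``only if'' direction is immediate from Lemma~\ref{lemma_consensus}: apply it to each of the three pairs $\{T_i,T_j\}$ (using $Cd(T_i,T_j)=Hd(T_i,T_j)$) to obtain (i), and to the full triple $T$ to obtain (ii).

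For the ``if'' direction I would induct on $l$; the base $l=0$ is trivial. For the inductive step I classify the first column by the equality pattern of $(T_1[0],T_2[0],T_3[0])$ into type A (all three equal), type $B_{ij}$ (exactly $T_i[0]=T_j[0]$), or type C (all three distinct). I then pick $M[0]$ according to the type and apply the induction hypothesis to the length-$(l-1)$ residual problem $(T',d')$, where $T'$ drops the first column and $d'$ is the distance budget after paying for $M[0]$. The task in each case is to show that some choice of $M[0]$ transfers (i) and (ii) to $(T',d')$.

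Type A is trivial: $M[0]$ is the common character, leaving all pairwise distances and $Cd(T)$ unchanged and $d'=d$. For type $B_{ij}$ with third index $k$, I prefer the ``pair'' choice $M[0]=T_i[0]$ whenever $d_k\geq 1$; this decreases only the $k$-th budget by one, and a direct check shows (i) and (ii) transfer. If $d_k=0$, then column $0$'s contribution together with the pairwise condition forces $d_i,d_j\geq 1$, and I take the ``singleton'' choice ($M[0]$ equal to the odd character). The reduced conditions required by singleton are stricter---namely $Hd(T_i,T_j)\leq d_i+d_j-2$ and $Cd(T)\leq d_i+d_j-1$---but both follow from the column identities $Hd(T_i,T_j)=Hd(T_i,T_k)+Hd(T_j,T_k)-2n_{B_{ij}}-n_C$ and $Hd(T_i,T_k)+Hd(T_j,T_k)=Cd(T)+n_{B_{ij}}$, combined with $d_k=0$ and $n_{B_{ij}}\geq 1$ (the first column contributes).

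Type C is the main obstacle. Each of the three choices $M[0]=T_k[0]$ decrements $d_i$ and $d_j$ by one and preserves the reduced conditions iff $d_i,d_j\geq 1$ and $Hd(T_i,T_j)\leq d_i+d_j-1$ for the pair $\{i,j\}$ distinct from $k$. To show such a $k$ always exists, I split on whether some $d_k$ equals zero. If so, $M[0]=T_k[0]$ is essentially forced, and the required strict inequality for $\{i,j\}$ comes from the triangle identity together with $n_C\geq 1$. Otherwise all $d_k\geq 1$, and I would use the column-count identities $\sum_{i<j}Hd(T_i,T_j)=2n_B+3n_C$ and $Cd(T)=n_B+2n_C$ to rule out simultaneous tightness of all three pairwise inequalities: tightness everywhere would force $\sum_j d_j=n_B+\tfrac{3}{2}n_C$, contradicting (ii)'s lower bound $n_B+2n_C$ once $n_C\geq 1$. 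Hence at least one pair has strict slack and the corresponding $k$ yields a valid choice, closing the induction.
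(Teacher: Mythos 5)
Your proof is correct, but it takes a genuinely different route from the paper. The paper gives a direct, one-shot construction: it classifies all $l$ columns into five types ($N_0$ for all-equal, $N_1,N_2,N_3$ for a single dissenter, $N_4$ for all-distinct) and splits into two global cases. If some $n_i\geq d_i$, it builds $M$ from $T_i$ by flipping exactly $d_i$ columns of type $N_i$ toward the other two strings and invokes the triangle inequality $Cd(T_i,M)+Cd(M,T_j)\leq Cd(T_i,T_j)$; otherwise it takes the majority character in every non-$N_4$ column and distributes the $N_4$ columns to cover each string's deficit, verifying $\sum_{i=1}^3\max(0,n_i+n_4-d_i)\leq n_4$ by exactly the counting identities you also use ($Hd(T_i,T_j)=n_i+n_j+n_4$, $Cd(T)=n_1+n_2+n_3+2n_4$). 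You instead induct on $l$, choosing $M$ column by column and showing the invariants (i) and (ii) transfer to the residual budgets; the only delicate points are your singleton choice when the odd string has budget $0$ and, in the all-distinct column, ruling out simultaneous tightness of all three pairwise bounds via $\sum_{i<j}Hd(T_i,T_j)=2n_B+3n_C$ against $Cd(T)=n_B+2n_C$ -- and I checked that these identities and deductions are right. What each buys: your argument is modular and immediately algorithmic (a left-to-right linear-time construction with a local greedy rule), at the cost of a finer case analysis per column; the paper's argument produces the neighbor globally in one pass over the column-type counts, which also makes explicit the quantities $n_0,\ldots,n_4$ that are reused in the algorithm's pruning.
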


\begin{proof}
The ``only if'' part follows from lemma \ref{lemma_consensus}. For the ``if'' part we show how to
construct a common neighbor $M$ provided that the conditions hold.
 
We say that a column $k$ where
$T_1[k]=T_2[k]=T_3[k]$ is of type $N_0$. If $T_1[k]\neq T_2[k] = T_3[k]$ then
the column is of type $N_1$. If $T_1[k]=T_3[k]\neq T_2[k]$ the
column is of type $N_2$ and if $T_1[k]=T_2[k]\neq T_3[k]$ then the column is of
type $N_3$. If all three characters in the column are distinct, the
column is of type $N_4$. Let $n_i, \forall i, 0\leq i \leq 4$ be
the number of columns of type $N_i$. Consider two cases:

Case 1) There exists $i, 1 \leq i \leq 3$ for which $n_i \geq d_i$. We construct
$M$ as illustrated in the left panel of figure \ref{figProofCase1}. Pick $d_i$ columns of type
$n_i$. For each chosen column $k$ set $M[k]=T_j[k]$ where $j\neq i$. For all
other columns  set $M[k]=T_i[k]$. Therefore $Cd(T_i, M) = d_i$. For $j\neq i$ we know that $Cd(T_i,T_j)
\leq d_i+d_j$ from our assumptions. We also know that
$Cd(T_i,M)+Cd(M,T_j)\leq Cd(T_i,T_j)$ from the triangle inequality.
It follows that $Cd(M,T_j)\leq d_j$. Since $Cd(M,T_j)=Hd(M,T_j)$ it means that
$M$ is indeed a common neighbor of the three $l$-mers.

Case 2) For all $i, 1 \leq i \leq 3$ we have $n_i < d_i$. We construct $M$ as
shown in the right panel of figure \ref{figProofCase2}. For columns $k$ of type $N_0,N_2$ and $N_3$
we set $M[k]=T_1[k]$. For columns of type $N_1$ we set $M[k]=T_2[k]$. 
For any $i,1\leq i \leq 3$ the following applies. If $n_i+n_4\leq d_i$ then the
Hamming distance between $M$ and $T_i$ is less than $d_i$ regardless of what
characters we choose for $M$ in the columns of type $N_4$.
On the other hand, if $n_i+n_4 > d_i$ then $M$ and $T_i$ have to match in at
least $n_i+n_4-d_i$ columns of type $N_4$. Thus, we pick
$max(0,n_i+n_4-d_i)$ columns of type $N_4$ and for each such column $k$ we set
$M[k]=T_i[k]$.
Now we prove that we actually have enough columns to make the
above choices, in other words $\Sigma_{i=1}^3max(0,n_i+n_4-d_i)\leq n_4$. This is equivalent to the following
conditions being true:
\begin{description}
\item[a)] For any $i, 1\leq i \leq 3$ we want $n_i+n_4-d_i \leq n_4$. This is
true because $n_i < d_i$.
\item[b)] For any $i,j, 1\leq i < j \leq 3$ we want
$(n_i+n_4-d_i)+(n_j+n_4-d_j) \leq n_4$. This can be rewritten as
$n_i+n_j+n_4\leq d_i+d_j$. The left hand side is $Hd(T_i, T_j)$ which
we know is less or equal to $d_i+d_j$.
\item[c)] We want $\Sigma_{i=1}^3n_i+n_4-d_i \leq n_4$. This can be rewritten as
$n_1 + n_2 + n_3 + 2 n_4 \leq d_1+d_2+d_3$. The left hand side is $Cd(T)$ which
we know is less than $d_1+d_2+d_3$.
\end{description}
\end{proof}

\begin{figure}
\begin{minipage}[b]{0.5\linewidth}
\centering
\includegraphics[width=\linewidth]{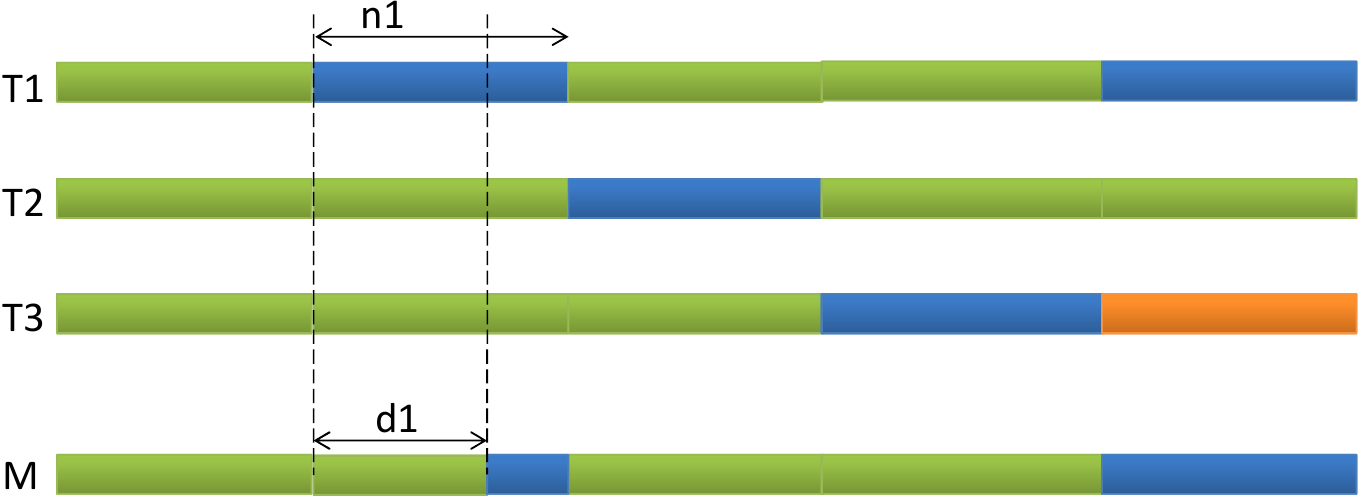}
\end{minipage}
\hspace{0.3cm}
\begin{minipage}[b]{0.5\linewidth}
\centering
\includegraphics[width=\linewidth]{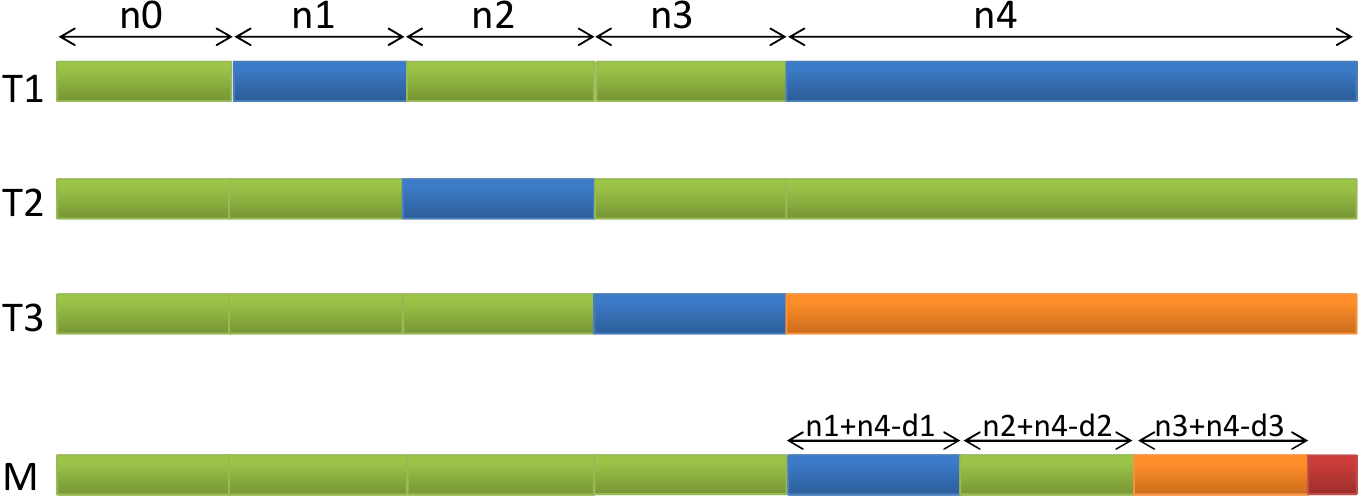}
\end{minipage}
\caption{\label{figProofCase2}\label{figProofCase1}Proof of theorem \ref{th_3sufficient}.
 Case 1 - left figure:  There exists $i, 1 \leq i \leq 3$ for which $n_i\geq d_i$. 
Without loss of generality we assume $i=1$. 
Case 2 - right figure: $n_i<d_i$ for all $i$, $1\leq i \leq 3$. 
The top 3 rows represent the input $l$-mers. The last row shows a common neighbor $M$. In any
column, identical colors represents matches, different colors represent mismatches. 
}
\vspace{-0.1in} \end{figure} \section{Results and Discussion} PMS8 is implemented in C++ and uses OpenMPI for communication between processors. PMS8 was evaluated on the Hornet cluster in the Booth Engineering Center for Advanced Technology (BECAT) at University of Connecticut. The Hornet cluster consists of 64 nodes, each
 equipped with 12 Intel Xeon X5650 Westmere cores 
 and 48 GB of RAM. The nodes use Infiniband
networking for MPI. In our experiments we employed at most 48 cores on at most 4
nodes.

We generated random $(l,d)$ instances according to \cite{Pev00} and as described in the introduction.
For every $(l,d)$ combination we report the average runtime over 5 random instances.
For several challenging instances, in figure \ref{figSpeedup} we present the speedup 
obtained by the parallel version over the single core version. For $p=48$ cores
the speedup is close to $S=45$ and thus the efficiency is $E=S/p=94\%$. 

The runtime of PMS8 on instances with $l$ up to $50$ and $d$ up
to $21$ is shown in figure \ref{figLDTable}. Instances which are expected to have more than $500$ motifs
simply by random chance (spurious motifs) are excluded. The expected number
of spurious motifs was computed as described in appendix \ref{secChallenging}. 
Instances where $d$ is small relative to $l$ are solved efficiently using a single CPU core.
For more challenging instances we report the time taken using 48 cores.

\begin{figure*}
\includegraphics[width=1.1\linewidth]{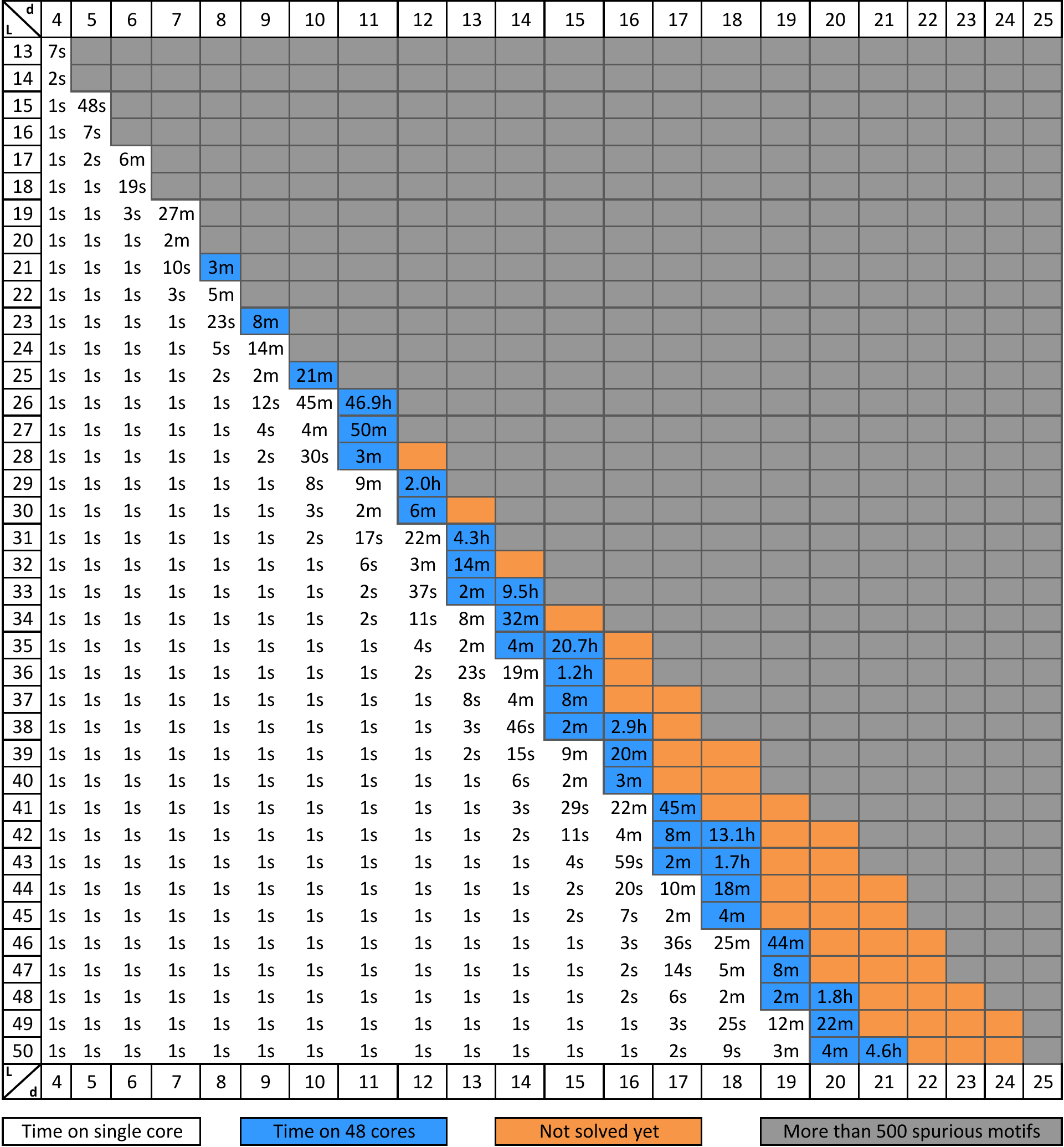}
\caption{\label{figLDTable}PMS8 runtimes for datasets with $l$ up to 50 and $d$ up to 25 
averaged over 5 random datasets.  White background signifies single
core execution.  Blue background signifies execution using 48 cores.
Instances in gray have more than 500 spurious motifs. Orange
cells indicate unsolved instances. Time is reported in seconds (s), minutes (m) or hours (h).}
\end{figure*}

A comparison between PMS8 and qPMS7 \cite{DRD12} on
challenging instances is shown in figure \ref{figCompChallenging}. 
Both programs have been executed on the Hornet cluster. qPMS7 is a sequential algorithm.
PMS8 was evaluated using up to 48 cores.
 The speedup of PMS8 single core over qPMS7 
is shown in figure \ref{figSpeedupQPMS7}. The speedup is high for small instances because
qPMS7 has to load an ILP table. For larger instances the speedup of PMS8 sharply increases. This is expected
because qPMS7 always generates neighborhoods for tuples of $3$ $l$-mers, which become very
large as $l$ and $d$ grow. On the other hand, PMS8 increases the number of $l$-mers in the tuple with the
instance size. With each $l$-mer added to the tuple, the size of
the neighborhood reduces exponentially, whereas the number of neighborhoods
generated increases by a linear factor. The ILP table precomputation requires solving many ILP formulations.
The table then makes qPMS7 less memory efficient than PMS8.
 The peak memory used by qPMS7 for the challenging instances in figure \ref{figCompChallenging} 
 was 607 MB whereas for PMS8 it was 122 MB.  PMS8 is the first algorithm to 
solve the challenging instances (25,10) and (26,11).

\begin{figure}
\begin{minipage}[b]{0.55\linewidth}
\centering
\includegraphics[width=\linewidth]{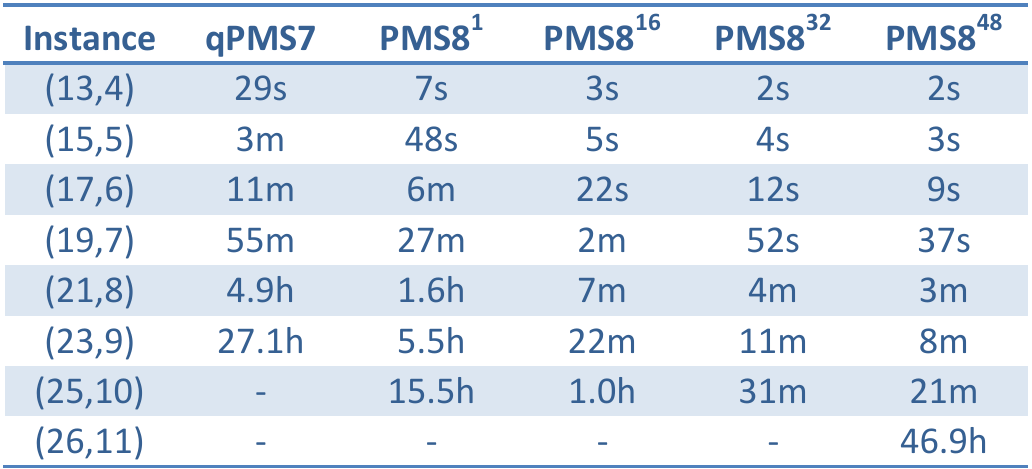}
\caption{\label{figCompChallenging}
Comparison between qPMS7 and PMS8 on
challenging instances. PMS$8^P$ means	 PMS8 used $P$ CPU cores.}
\end{minipage}
\hspace{0.5cm}
\begin{minipage}[b]{0.45\linewidth}
\centering
\includegraphics[width=\linewidth]{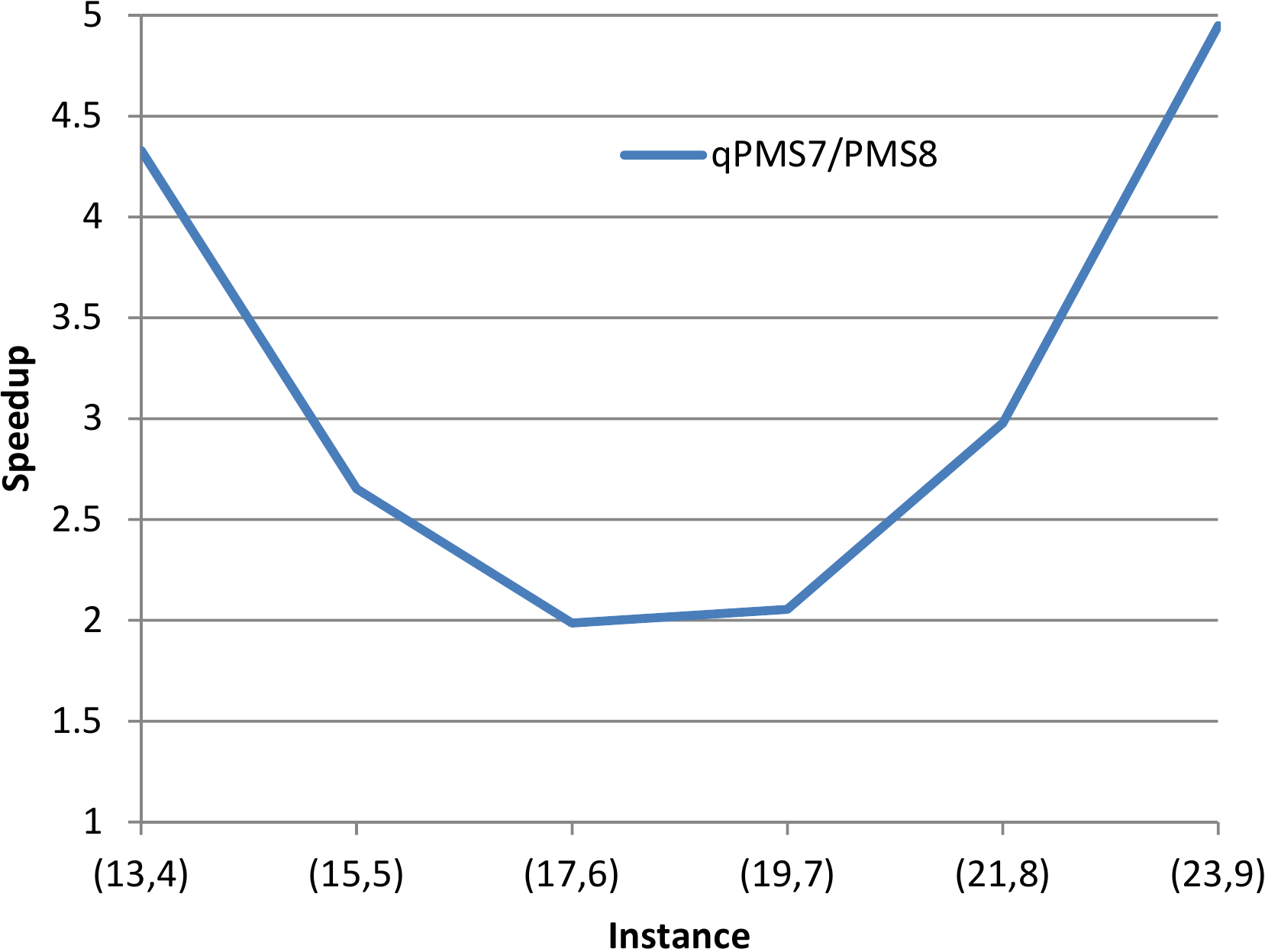}
\caption{\label{figSpeedupQPMS7}
Speedup of PMS8 single core over qPMS7. 
}
\end{minipage}
\end{figure}

Some recent results in the literature have focused on instances other than
the challenging ones presented above. A summary of these results and a comparison
with PMS8 is presented in table \ref{table_comparison}. These
results have been obtained on various types of hardware: single core,
multi-core, GPU, grid. In the comparison, we try to match the number of processors whenever possible.
However, the speed difference is large enough that the hardware is unlikely to play an
important part.

\begin{table*}
\def\arraystretch{1.1}
\begin{tabular}{| p{2.3in} | p{0.5in} | p{0.5in} | p{0.6in} | p{0.4in} | p{0.4in} |}
\hline
Previous algorithm & Instance & Time & Cores & PMS8 Time & PMS8 Cores\\
\hline
Yu et al. 2012 \cite{YHZG12}, PairMotif & (27, 9) & 10h & 1 & 4s & 1\\
\hline
\multirow{2}{*}{Desaraju and Mukkamala 2011 \cite{DeM11}} & (24,6) & 347s & 1 & 1s & 1\\
\cline{2-6}
                                         & (48,12)& 188s & 1 & 1s & 1\\
\hline
\multirow{2}{*}{\vbox{Dasari et al. 2011 \cite{DRZ11}, mSPELLER / gSPELLER}} & (21,8) & 3.7h & 16 & 7m & 16\\
\cline{2-6}
                                                   & (21,8) & 2.2h & \multirow{2}{*}{\vbox{4 GPUs x 240 cores}} & 7m & 16\\
\cline{1-3}\cline{5-6}
Dasari et al. 2010 \cite{DDZ10}, BitBased & (21,8) & 1.1h & & 7m & 16\\
\hline
Dasari and Desh 2010 \cite{DD10}, BitBased & (21,8) & 6.9h & 16 & 7m & 16\\
\hline
Sahoo et al. 2011 \cite{SSRP11} & (16,4) & 106s & 4 & 1s & 1\\
\hline
Sun et al. 2011 \cite{SLHTR11}, TreeMotif &(40,14) & 6h & 1 & 6s & 1\\
\hline
He et al. 2010 \cite{HLWR10}, ListMotif & (40,14) & 28,087s & 1 & 6s &1\\
\hline
Faheem 2010 \cite{Fah10}, skip-Brute Force & (15,4) & 2934s & 96 nodes & 1s & 1\\
\hline
\multirow{3}{*}{Ho et al. 2009 \cite{HJG09}, iTriplet} & (24,8) & 4h & 1 & 5s & 1\\
\cline{2-6}
                                      & (38,12) & 1h & 1 & 1s & 1\\
\cline{2-6}
                                      & (40,12) & 5m & 1 & 1s & 1\\
\hline
\end{tabular}
\caption{\label{table_comparison}Side by side comparison between previous
results in the literature and PMS8. Time is reported in seconds (s), minutes (m) 
or hours (h).  }
\end{table*}

\commentOut{
\section{Conclusions}
Motif searching is an important problem in computer science and bioinformatics. 
We have presented PMS8, an exact parallel motif finding algorithm. PMS8 is faster 
than previous methods for both short and long motifs. PMS8 is the first algorithm 
to solve the challenging instances (25,10) and (26,11).

In section \ref{sec_pruning} we have introduced necessary and sufficient conditions 
for three $l$-mers to have a common neighbor. These novel conditions 
may find independent applications.
}

\section{Acknowledgment}
The authors would like to thank Prof. Chun-Hsi (Vincent) Huang, Dr. Hieu Dinh,
Tian Mi and Gabriel Sebastian Ilie for helpful discussions. This work has been supported in part by the following grants: NSF 0829916 and NIH R01-LM010101.

\bibliographystyle{splncs03}
\bibliography{references}
%
%
\newpage

\appendix
\section{Generating neighborhoods}
\label{secGenNeighborhood}
\begin{quote}
\begin{tabbing}
aa \= aa \= aa \= aa \= aa \= \kill
{\bf Algorithm 1. GenerateNeighborhood}($T,d$)\\
\> {\bf for} ($i=1..|T|$) {\bf do} $r_i:=d;$\\
\> {\bf GenerateNeighborhood}($T,r,1$)\\
\\
{\bf GenerateNeighborhood}($T,r,p$)\\
\>{\bf if} ($p \leq l$) {\bf then} \\
\>\>{\bf if} ({\bf not} prune($T,r$)) {\bf then} \\
\>\>\> {\bf for} $\alpha \in \Sigma$ {\bf do} \\
\>\>\>\> $x_p:=\alpha$ \\
\>\>\>\> {\bf for} ($i=1..|T|$) {\bf do} \\ 
\>\>\>\>\> $T_i':=T_i[2..|S_i|]$\\
\>\>\>\>\> $r_i':=r_i$;\\
\>\>\>\>\> {\bf if} ($T_i[0]\neq \alpha$) {\bf then } $r_i':=r_i'-1;$\\
\>\>\>\> {\bf end for}\\
\>\>\>\> {\bf GenerateNeighborhood}($T', r', p+1$) \\
\>\>\> {\bf end for} \\
\>\>{\bf end if} \\
\>{\bf else}\\
\>\> report $l$-mer $x$ \\
\>{\bf end if}
\end{tabbing}
\end{quote}

\section{PMS8 pseudocode}
\label{secPMS8}
\begin{quote}
\begin{tabbing}
aa \= aa \= aa \= aa \= aa \= \kill
{\bf Algorithm 2. PMS8}($T,d$)\\
\> {\bf for} ($i=1..n$) {\bf do}  $R_i=\{u | u \in S_i\}$ \\
\> $stack=\{\}$ \\
\> {\bf GenerateMotifs}($1, stack, R$) \\
\\
{\bf GenerateMotifs}($p, stack, R$)\\
\>{\bf for} ($u \in R_p$) {\bf do}\\
\>\> stack.push($u$) \\
\>\> $R':=$filter($R$, stack)\\
\>\> {\bf if} ($R'.$size $> 0$) {\bf then}\\
\>\>\> {\bf if} (ThresholdCondition) {\bf then} \\
\>\>\>\>$N:=${\bf GenerateNeighborhood}(stack,$d$)\\
\>\>\>\>{\bf for} ($m \in N$) {\bf do}\\
\>\>\>\>\> {\bf if} (isMotif($m,R'$)) {\bf then} output $m$;\\
\>\>\> {\bf else} \\
\>\>\>\> {\bf GenerateMotifs}($p+1,R'$)\\
\>\> stack.pop()\\
\> {\bf end for}\\
\end{tabbing}
\end{quote}

\section{Challenging instances}
\label{secChallenging}
For a fixed $l$, as $d$ increases, the instance becomes more challenging.
However, as $d$ increases, the number of false positives also increases, because
many motifs will appear simply by random chance.  The expected number of
spurious motifs in a random instance can be estimated as follows (see e.g., \cite{DBR07}). The number 
of $l$-mers in the neighborhood of a given $l$-mer $M$ is $N(\Sigma, l,
d)=\Sigma_{i=0}^d(_d^l)(|\Sigma|-1)^d$. The probability that $M$ is a
$d$-neighbor of a random $l$-mer is $p(\Sigma, l, d)=N(\Sigma, l
,d)/|\Sigma|^l$. The probability that $M$ has at least one $d$-neighbor among
the $l$-mers of a string of length $m$ is thus
$q(m,\Sigma,l,d)=1-(1-p(\Sigma,l,d))^{m-l+1}$.
The probability that $M$ has at least one $d$-neighbor in each of $n$ random
strings of length $m$ is $q(m,\Sigma,l,d)^n$. Finally, the expected number of
spurious motifs in an instance with $n$ strings of length $m$ each is:
$|\Sigma|^l q(m,\Sigma,l,d)^n$. In this paper
we consider all combinations of $l$ and $d$ where $l$ is at most 50 and
the number of spurious motifs (expected by random chance) does not
exceed 500.
Note that for a fixed $d$, if we can solve instance $(l,d)$ we can also
solve all instances $(l',d)$ where $l'>l$, because they are less challenging
than $(l,d)$.

\section{Threshold where we switch from the sample to the pattern driven part}
\label{secThreshold}
Assume
that we switch to pattern generation as soon as the stack size is equal to $t$.
The number of $l$-mers in the first row is $m-l+1$. Assume that with each
$l$-mer we add to the stack, the number of surviving $l$-mers in each row
decreases with rate $p$. In other words, after we add one $l$-mer to the stack,
in each remaining row we are left with $p(m-l+1)$ $l$-mers. After we add $k$
$l$-mers to the stack we are left with $S_k=(m-l+1)p^k$ $l$-mers in each row.
The number of tuples (stacks) of size $k$ we expect to generate is then $T_k =
\Pi_{i=1}^{k-1} S_k = (m-l+1)^kp^{k(k-1)/2}$. For every tuple of size $k$ we
have to filter each of the surviving $l$-mers. There are $n-k$ rows to filter,
and each row contains $S_k$ items. Testing the filtering conditions for one
$l$-mer takes $O(l)$ time. Therefore, for each tuple of size $k$ filtering takes
$O(n S_k l)$ time. If we consider all the tuples of sizes up to $t$ we get the
following estimate on the runtime of the sample driven part: $Time_s(t) =
O(\Sigma_{k=1}^t n T_k l = nl \Sigma_{k=1}^t (m-l+1)^kp^{k(k-1)/2})$.

A simple upper bound for
$p$ can be obtained as follows. Define the number of $l$-mers at a distance of no more than $d$ from a given
$l$-mer as $N_d=\Sigma_{i=0}^d(_i^l)(|\Sigma|-1)^i$.  Consider only the $l$-mer
$u$ at the top of the stack. The probability that $u$ is at a distance no more than $2d$ from one of
the $l$-mers in the remaining rows of $R$ is $p=N_{2d}/\Sigma^l$. The
runtime above becomes $Time_s(t) =  nl \Sigma_{k=1}^t
(m-l+1)^kN_{2d}^{k(k-1)/2}/|\Sigma|^{lk(k-1)/2}$.

Next we look at the pattern driven part. We estimated that we generate $T_{t}$
tuples. For a tuple of size $1$, that is a single $l$-mer, there are
$N_d$ neighbors to enumerate. As $t$ increases, the number of $l$-mers we
enumerate per tuple decreases with rate $q$, and so, for a tuple of size $t$ we
enumerate $N_d q^{t-1}$ $l$-mers. Assuming perfect pruning, the time spent in
the pattern driven part is $Time_p(t) = O(T_t N_d q^{t-1} l)$.

To estimate $q$ consider the following. Say we know $M$ is a common neighbor for
a tuple of size $k$. Then we add one more $l$-mer $v$ to the tuple. We know that
$v$ is within distance $2d$ from the $l$-mer at the top of the stack. The 
probability that $M$ is within distance $d$ of $v$ is upper bounded
by $N_d/N_2d$. Here we make the generous assumption that the entire
$d$-neighborhood of $M$ is included in the $2d$-neighborhood of $v$. So an
upper bound for $q$ is $N_d/N_{2d}$ and $Time_p(t) =
O((m-l+1)^tN_{2d}^{t(t-1)/2}/|\Sigma|^{lt(t-1)/2} N_d^t/ N_{2d}^{t-1} l)$.

If we make the two times equal we unfortunately don't get a closed form
solution for the optimal $t$. However, the expressions can be computed for
every value of $t$ from 1 to $n$. We then can pick the best $t$. In practice we
use a much simpler formula where $t$ increases with $d$ and with $\log
|\Sigma|$ and decreases with $m$. The threshold
increases as $d$ gets bigger or $|\Sigma|$ gets bigger to avoid generating
very large neighborhoods and decreases with $m$ to avoid spending too much time
doing filtering. 

\end{document}